\documentclass[aps,
pra,reprint,superscriptaddress]{revtex4-2}
\usepackage{mathrsfs}
\usepackage{array}
\usepackage{booktabs}
\usepackage{tabu}
\usepackage{physics}
\usepackage{dcolumn}
\usepackage{amsmath}
\usepackage{amsfonts}
\usepackage{float}
\usepackage{amssymb}
\usepackage{amsthm}
\usepackage{graphicx,color}
\usepackage[colorlinks={true}]{hyperref}
\hypersetup{colorlinks=true,linkcolor=red,citecolor=blue,urlcolor=blue}
\usepackage{graphicx}
\usepackage{subfigure}
\usepackage{graphicx}
\usepackage{dcolumn}
\usepackage{bm}
\usepackage{pstricks}
\usepackage{braket}
\usepackage{orcidlink}

\def\be{\begin{equation}}
  \def\ee{\end{equation}}
\def\bea{\begin{eqnarray}}
\def\eea{\end{eqnarray}}

\newtheorem{theorem}{Theorem}
\newtheorem{corollary}[theorem]{Corollary}
\newtheorem{proposition}[theorem]{Proposition}

\theoremstyle{remark}
\newtheorem{remark}{Remark}

\newcommand{\Ecal}{\mathcal{E}}

\begin{document}

\title{High-dimensional coherence to entanglement transduction under canonical noise}

\author{Asad Ali~\!\!\orcidlink{0000-0001-9243-417X}}
\email{asal68826@hbku.edu.qa}
\affiliation{Qatar Center for Quantum Computing, College of Science and Engineering, Hamad Bin Khalifa University, Doha, Qatar}

\author{Aiham M. Rostom~\!\!\orcidlink{0000-0002-0084-0961}}
\affiliation{Institute of Automation and Electrometry SBRAS, 630090, Novosibirsk,
Russia}

\author{Saif Al-Kuwari~\!\!\orcidlink{0000-0002-4402-7710}}
\affiliation{Qatar Center for Quantum Computing, College of Science and Engineering, Hamad Bin Khalifa University, Doha, Qatar}

\author{H. Kuniyil~\!\!\orcidlink{0000-0003-0338-1278}}
\affiliation{Qatar Center for Quantum Computing, College of Science and Engineering, Hamad Bin Khalifa University,  Doha, Qatar}

\author{M.T. Rahim~\!\!\orcidlink{0000-0003-1529-928X}}
\affiliation{Qatar Center for Quantum Computing, College of Science and Engineering, Hamad Bin Khalifa University, Doha, Qatar}

\author{Saeed Haddadi~\!\!\orcidlink{0000-0002-1596-0763}}\email{haddadi@ipm.ir}
\address{School of Particles and Accelerators, Institute for Research in Fundamental Sciences (IPM), P.O. Box 19395-5531, Tehran, Iran}


\begin{abstract}
We develop an analytical framework for coherence-to-entanglement conversion in bipartite high-dimensional quantum systems, so-called qunits. An arbitrary coherent input qunit is coupled to an incoherent ancilla through a generalized controlled-shift operation, producing a maximally correlated bipartite state. By analyzing the partial transpose of the output state, we establish an exact dimension-independent connection between the input coherence and the generated entanglement. We then study how this conversion is affected by three standard noise processes applied after the conversion step: phase damping, global depolarizing noise, and independent amplitude damping. The resulting expressions show that these channels degrade entanglement in qualitatively different ways. Phase damping leads to a uniform attenuation of the entanglement generated from coherence, depolarizing noise introduces pairwise thresholds associated with entanglement sudden death, and amplitude damping produces an asymmetric decay governed by relaxation toward the ground state. For maximally coherent inputs, the general results reduce to simple closed-form behavior, allowing direct comparison of the three noise mechanisms as the system dimension increases. In particular, global depolarizing noise exhibits a dimension-dependent sudden-death threshold, while amplitude damping leads to a smooth suppression in the maximally coherent case. These results provide useful analytical benchmarks for high-dimensional resource conversion and for assessing noisy entanglement generation in qudit-based quantum-information settings.
\end{abstract}
\maketitle

\section{Introduction}
Quantum coherence and entanglement are two fundamental manifestations of nonclassicality. Coherence refers to the presence of off-diagonal matrix elements relative to a specified reference basis, while entanglement quantifies nonclassical correlations between distinct subsystems. Both resources underlie quantum computation, communication, sensing, thermodynamics, and many-body physics~\cite{NielsenChuang2000,Horodecki2009,PlenioVirmani2007,StreltsovRMP2017,ChitambarGour2019}. The modern resource theory of coherence formalizes basis-dependent superposition as a quantifiable resource and introduces coherence monotones such as the $l_1$-norm of coherence and relative entropy of coherence~\cite{Baumgratz2014,WinterYang2016,Napoli2016,Yuan2015,HU20181}. A particularly important observation is that local coherence can be converted into bipartite entanglement by suitable basis-preserving controlled operations~\cite{Streltsov2015,Ma2016,Killoran2016,Regula2018}.
Most elementary demonstrations of coherence-to-entanglement conversion are two-dimensional. However, many contemporary quantum platforms naturally support high-dimensional Hilbert spaces. Examples include photonic orbital angular momentum, path and time-bin encodings, multilevel atoms, trapped ions, superconducting circuits, and integrated photonic frequency combs~\cite{Durt2010,Erhard2020,Wang2020,Kues2017,Flamini2018}. High-dimensional quantum systems, often called qudits or qunits in dimension $n$, provide a larger information capacity per carrier, stronger noise tolerance in some communication protocols, and richer entanglement structures than two-level systems \cite{Cerf2002,Collins2002,Thew2004,AliKhan2007,Islam2017,Haddadi2025,Haddadi2025PRA}. This motivates a fully analytical treatment of the coherence-to-entanglement conversion in arbitrary finite dimensions.

The central object of this work is the controlled-shift conversion protocol. A coherent input qunit $A$ is prepared in an arbitrary pure state
\begin{equation}
\ket{\psi_A}=\sum_{j=0}^{n-1}c_j\ket{j}_A,
\end{equation}
while an ancillary qunit $B$ is initialized in the incoherent reference state $\ket{0}_B$. The generalized controlled-shift operation maps
\begin{equation}
\ket{j}_A\ket{k}_B
\mapsto
\ket{j}_A\ket{(k+j)\bmod n}_B.
\end{equation}
Consequently,
\begin{equation}
\ket{\psi_A}\ket{0}_B\mapsto \sum_{j=0}^{n-1}c_j\ket{jj}_{AB}.
\end{equation}
The output belongs to the class of maximally correlated states, whose entanglement properties are analytically tractable and deeply connected to coherence theory \cite{Rains2001,Vollbrecht2001,Streltsov2015,Regula2018}.

The first goal of this paper is to prove, in full generality, that the ideal controlled-shift protocol converts input $l_1$-norm of coherence $C_{l_1}(\rho_A)$ into output negativity $N_0$ according to the exact dimension-independent identity
\begin{equation}
N_0=\frac{1}{2}C_{l_1}(\rho_A).
\end{equation}
The second goal is to determine exactly how this conversion is degraded by physically important noise processes. We consider three canonical channels: independent phase damping, global depolarizing noise, and independent amplitude damping \cite{NielsenChuang2000,BreuerPetruccione2002,RivasHuelga2012,Wilde2017}. These channels capture, respectively, phase randomization without population relaxation, isotropic mixing with the maximally mixed state, and irreversible decay toward the ground level. Entanglement degradation under noise is a central theme in open quantum systems, including the phenomenon of entanglement sudden death \cite{YuEberly2004,YuEberly2009,Almeida2007,Ann2007}.
The main message is that the three noise mechanisms produce qualitatively different mathematical structures. Phase damping acts as a uniform multiplicative decay of every entanglement-carrying coherence. Global depolarization shifts each partial-transpose block by the same positive noise floor and hence produces pairwise sudden-death thresholds. Amplitude damping is more asymmetric: it treats the ground level differently from all excited levels, causing ground-excited coherence pairs to behave differently from excited-excited coherence pairs. These distinctions are invisible if one focuses only on the noiseless conversion identity.

The remainder of the paper is organized as follows. Section~\ref{sec:preliminaries} introduces notation, qunit states, coherence, and negativity. Section~\ref{sec:controlledshift} defines the controlled-shift conversion protocol. Section~\ref{sec:idealconversion} derives the ideal partial-transpose spectrum and proves the exact coherence-to-negativity identity. Section~\ref{sec:channels} defines the three noise channels and derives the exact noisy negativities. Section~\ref{sec:suddendeath} analyzes sudden-death thresholds and treats maximally coherent inputs, followed by the explicit low-dimensional examples. Section~\ref{sec:discussion} discusses physical interpretation and possible applications, and Sec.~\ref{sec:conclusion} concludes the article.

\section{Preliminaries}\label{sec:preliminaries}

\subsection{Qunit Hilbert spaces}

Let $A$ and $B$ be two $n$-level/dimensional quantum systems. Their Hilbert spaces are
\begin{equation}
\mathcal{H}_A\cong\mathbb{C}^n,
\qquad
\mathcal{H}_B\cong\mathbb{C}^n,
\end{equation}
with composite space
\begin{equation}
\mathcal{H}_{AB}=\mathcal{H}_A\otimes\mathcal{H}_B\cong\mathbb{C}^{n^2}.
\end{equation}
The computational basis of a single qunit is defined as $\{\ket{0},\ket{1},\ldots,\ket{n-1}\}$. So, the two-qunit basis is written as
\begin{equation}
\ket{jk}\equiv\ket{j}_A\otimes\ket{k}_B,
\qquad j,k\in\{0,1,\ldots,n-1\},
\end{equation}
and the basis satisfies
\begin{equation}
\langle j,k \mid l,m \rangle = \delta_{jl}\,\delta_{km},
\qquad
\sum_{j,k=0}^{n-1} \ket{j,k}\bra{j,k} = \mathbb{I}_{\mathbb{C}^n \otimes \mathbb{C}^n}.
\end{equation}

\subsection{Input state}

The input qunit $A$ is prepared in an arbitrary normalized pure state
\begin{equation}
\ket{\psi_A}=\sum_{j=0}^{n-1}c_j\ket{j}_A,
\qquad
\sum_{j=0}^{n-1}|c_j|^2=1,
\label{eq:input_state}
\end{equation}
and the complex amplitudes may be written as
\begin{equation}
c_j=\sqrt{q_j}e^{i\phi_j},
\qquad q_j\ge0,
\qquad \sum_{j=0}^{n-1}q_j=1.
\end{equation}
One global phase is physically irrelevant. The ancilla $B$ is initialized in the incoherent reference state $\ket{0}_B$. Thus, the initial two-qunit state is
\begin{equation}
\ket{\Psi_0}_{AB}=\ket{\psi_A}\otimes\ket{0}_B=\sum_{j=0}^{n-1}c_j\ket{j0}.
\label{eq:initial_twoqunit_state}
\end{equation}
This state is a product state and therefore contains no entanglement between $A$ and $B$.

\subsection{$l_1$-norm of coherence}

From Eq.~\eqref{eq:input_state}, the density operator of the input state is
\begin{equation}
\rho_A=\ket{\psi_A}\bra{\psi_A}=\sum_{j,k=0}^{n-1}c_jc_k^*\ket{j}\bra{k}.
\end{equation}
For this input state, the $l_1$-norm of coherence in the computational basis can be defined by \cite{Baumgratz2014}
\begin{equation}
C_{l_1}(\rho_A)=\sum_{j\ne k}|(\rho_A)_{jk}|.
\end{equation}
Since $(\rho_A)_{jk}=c_jc_k^*$, one obtains
\begin{equation}
C_{l_1}(\rho_A)=\sum_{j\ne k}|c_j||c_k|.
\label{eq:l1_sum}
\end{equation}
Using normalization, we have
\begin{align}
\left(\sum_{j=0}^{n-1}|c_j|\right)^2
&=\sum_{j=0}^{n-1}|c_j|^2+\sum_{j\ne k}|c_j||c_k|
=1+C_{l_1}(\rho_A),
\end{align}
so that
\begin{equation}
C_{l_1}(\rho_A)=\left(\sum_{j=0}^{n-1}|c_j|\right)^2-1.
\label{eq:l1_compact}
\end{equation}
Equivalently, we arrive at
\begin{equation}
C_{l_1}(\rho_A)=2\sum_{0\le j<k\le n-1}|c_jc_k|.
\label{eq:l1_pairwise}
\end{equation}
For more details, see Appendix~\ref{app:A}.

\subsection{Partial transpose and negativity}

For a density operator $\rho_{AB}$ on $\mathcal{H}_A\otimes\mathcal{H}_B$, the partial transpose with respect to subsystem $B$ is defined by
\begin{equation}
\matrixel{i,j}{\rho_{AB}^{T_B}}{k,l}=\matrixel{i,l}{\rho_{AB}}{k,j}.
\end{equation}
Equivalently,
\begin{equation}
\left(\ket{i}_A\ket{j}_B\bra{k}_A\bra{l}_B\right)^{T_B}=\ket{i}_A\ket{l}_B\bra{k}_A\bra{j}_B.
\end{equation}
Based on this definition, the negativity is introduced by \cite{VidalWerner2002,Plenio2005}
\begin{equation}
N(\rho_{AB})=\frac{\|\rho_{AB}^{T_B}\|_1-1}{2}.
\end{equation}
If $\{\lambda_r\}$ are the eigenvalues of $\rho_{AB}^{T_B}$, then
\begin{equation}
N(\rho_{AB})=\sum_{\lambda_r<0}|\lambda_r|.
\label{eq:negativity_negative_eigs}
\end{equation}
Note that for $2\otimes2$ and $2\otimes3$ systems, positivity of the partial transpose is necessary and sufficient for separability \cite{Peres1996,Horodecki1996}. In higher dimensions, PPT is not sufficient for separability, but negativity remains a computable entanglement monotone and detects non-PPT entanglement \cite{VidalWerner2002,Horodecki2009}.

\section{Controlled-Shift Coherence-to-Entanglement Conversion}\label{sec:controlledshift}

Let us define the generalized shift operator $X_n$ on $\mathbb{C}^n$ by
\begin{equation}
X_n\ket{k}=\ket{(k+1)\bmod n}.
\end{equation}
Its powers satisfy
\begin{equation}
X_n^j\ket{k}=\ket{(k+j) \bmod n}.
\end{equation}
The operator is unitary, with
\begin{equation}
X_n^\dagger=X_n^{-1}=X_n^{n-1},
\qquad
X_n^n=I_n.
\end{equation}
Also, the generalized controlled-shift operation is defined as
\begin{equation}
U_{\rm CS}^{(n)}=\sum_{j=0}^{n-1}\ket{j}\bra{j}_A\otimes X_n^j,
\label{eq:ucs_def}
\end{equation}
and its action on the computational basis is
\begin{equation}
U_{\rm CS}^{(n)}\ket{j}_A\ket{k}_B=\ket{j}_A\ket{(k+j) \bmod n}_B.
\label{eq:ucs_action}
\end{equation}
Unitarity follows from
\begin{align}
U_{\rm CS}^{(n)\dagger}U_{\rm CS}^{(n)}
&=\sum_{j,l=0}^{n-1}\delta_{jl}\ket{j}\bra{l}_A\otimes X_n^{-j}X_n^l\nonumber\\
&=\sum_{j=0}^{n-1}\ket{j}\bra{j}_A\otimes I_n=I_{n^2}.
\end{align}
Applying $U_{\rm CS}^{(n)}$ to Eq.~\eqref{eq:initial_twoqunit_state}, we obtain
\begin{align}
\ket{\Psi}_{AB}
&=U_{\rm CS}^{(n)}\sum_{j=0}^{n-1}c_j\ket{j0}
=\sum_{j=0}^{n-1}c_j\ket{jj}.
\label{eq:maxcorr_output}
\end{align}
So, the corresponding density operator is
\begin{equation}
\rho_{AB}=\ket{\Psi}_{AB}\bra{\Psi}=\sum_{j,k=0}^{n-1}c_jc_k^*\ket{jj}\bra{kk}.
\label{eq:maxcorr_density}
\end{equation}
This is a pure maximally correlated state. Its Schmidt coefficients are $\{|c_j|\}_{j=0}^{n-1}$, and its Schmidt rank is the number of nonzero amplitudes $c_j$.

\section{Ideal Conversion: Exact Partial-Transpose Spectrum}\label{sec:idealconversion}


Using $(\ket{jj}\bra{kk})^{T_B}=\ket{jk}\bra{kj}$, Eq.~\eqref{eq:maxcorr_density} gives
\begin{equation}
\rho_{AB}^{T_B}=\sum_{j,k=0}^{n-1}c_jc_k^*\ket{jk}\bra{kj}.
\label{eq:ideal_pt}
\end{equation}
This operator decomposes into one-dimensional diagonal blocks and two-dimensional pair blocks.
For $j=k$, Eq.~\eqref{eq:ideal_pt} contains $|c_j|^2\ket{jj}\bra{jj}$, so that each $\ket{jj}$ is an eigenvector with the following eigenvalue
\begin{equation}
\lambda_j=|c_j|^2\ge0.
\end{equation}
For every unordered pair $j<k$, the subspace $\mathcal{H}_{jk}={\rm span}\{\ket{jk},\ket{kj}\}$ is invariant under $\rho_{AB}^{T_B}$. In the ordered basis $\{\ket{jk},\ket{kj}\}$, the block is
\begin{equation}
B_{jk}^{(0)}=\begin{pmatrix}
0 & c_jc_k^*\\
c_j^*c_k & 0
\end{pmatrix}.
\end{equation}
The characteristic polynomial is $\det(B_{jk}^{(0)}-\lambda I)=\lambda^2-|c_jc_k|^2$, so the eigenvalues are (see Appendix~\ref{app:B})
\begin{equation}
\lambda_{jk}^{\pm}=\pm |c_jc_k|.
\label{eq:ideal_pair_eigs}
\end{equation}
The only negative eigenvalues are $-|c_jc_k|$ for $j<k$. Therefore,
\begin{equation}
N_0=\sum_{0\le j<k\le n-1}|c_jc_k|.
\label{eq:ideal_negativity}
\end{equation}
Using Eq.~\eqref{eq:l1_pairwise}, we obtain
\begin{equation}
N_0=\frac{1}{2}C_{l_1}(\rho_A).
\label{eq:ideal_identity}
\end{equation}

\begin{theorem}[Dimension-independent ideal conversion]
For the controlled-shift conversion protocol defined by Eq.~\eqref{eq:ucs_action}, with input state Eq.~\eqref{eq:input_state} and ancilla $\ket{0}_B$, the ideal output negativity is exactly one half of the input $l_1$-norm coherence:
\begin{equation}
N_0=\frac{1}{2}C_{l_1}(\rho_A).
\end{equation}
\end{theorem}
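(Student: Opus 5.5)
The plan is to compute the full spectrum of the partial transpose of the output state and then compare the negative part with the pairwise form of the $l_1$-norm of coherence in Eq.~\eqref{eq:l1_pairwise}.

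First, I will apply $U_{\rm CS}^{(n)}$ to $\ket{\psi_A}\ket{0}_B$ via Eq.~\eqref{eq:ucs_action}. This gives the maximally correlated pure state Eq.~\eqref{eq:maxcorr_output}, with density operator Eq.~\eqref{eq:maxcorr_density}. Applying the partial-transpose rule termwise, $(\ket{jj}\bra{kk})^{T_B}=\ket{jk}\bra{kj}$, yields Eq.~\eqref{eq:ideal_pt}.

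Second, I will decompose $\mathcal{H}_{AB}$ into mutually orthogonal invariant subspaces. These are the $n$ lines ${\rm span}\{\ket{jj}\}$ together with the $\binom{n}{2}$ planes $\mathcal{H}_{jk}={\rm span}\{\ket{jk},\ket{kj}\}$ for $j<k$. Together they exhaust the $n^2$ basis vectors, so $\rho_{AB}^{T_B}$ is block diagonal with respect to this decomposition.

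\begin{itemize}
\item On each line, the eigenvalue is $|c_j|^2\ge 0$.
\item On each plane, the block is the off-diagonal matrix $B^{(0)}_{jk}$, whose eigenvalues are $\pm|c_jc_k|$ (Eq.~\eqref{eq:ideal_pair_eigs}).
\end{itemize}

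The point needing care is that the blocks are genuinely invariant and exhaustive. No other basis element $\ket{jk}\bra{kj}$ couples distinct pair spaces, and each unordered pair $j<k$ is counted once, receiving both the $(j,k)$ and $(k,j)$ terms of the sum. This ensures the listed $n+2\binom{n}{2}=n^2$ eigenvalues constitute the entire spectrum. As a check, their sum equals $\sum_j|c_j|^2=1=\operatorname{tr}\rho_{AB}^{T_B}$.

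Then Eq.~\eqref{eq:negativity_negative_eigs} gives $N_0=\sum_{j<k}|c_jc_k|$. Pairs with a vanishing amplitude contribute zero eigenvalues and do not affect the sum. Comparing with Eq.~\eqref{eq:l1_pairwise}, which rests on the normalization identity Eq.~\eqref{eq:l1_compact}, gives $N_0=\tfrac12 C_{l_1}(\rho_A)$ for every $n$. The argument never uses a specific value of $n$, so the identity is dimension independent.

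There is no real obstacle beyond the bookkeeping of the block decomposition. As a cross-check, I will also note that for a pure state $\|\rho^{T_B}\|_1=(\sum_j|c_j|)^2$, the squared sum of the Schmidt coefficients. Combining this with Eq.~\eqref{eq:l1_compact} reproduces $N_0=\bigl[(\sum_j|c_j|)^2-1\bigr]/2=\tfrac12 C_{l_1}(\rho_A)$ directly.
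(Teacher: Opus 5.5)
Your proposal is correct and follows the paper's proof: you compute $\rho_{AB}^{T_B}$, block-diagonalize it into the $n$ lines $\ket{jj}$ and the pair planes ${\rm span}\{\ket{jk},\ket{kj}\}$, read off the negative eigenvalues $-|c_jc_k|$, and match their sum with $C_{l_1}=2\sum_{j<k}|c_jc_k|$. Your Schmidt-coefficient cross-check, using $\|\rho^{T_B}\|_1=(\sum_j|c_j|)^2$ together with Eq.~\eqref{eq:l1_compact}, is a valid extra; one small correction is that Eq.~\eqref{eq:l1_pairwise} does not need normalization, only the symmetry of the $j\ne k$ sum.
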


\begin{proof}
The partial transpose has non-negative diagonal eigenvalues $|c_j|^2$ and pairwise eigenvalues $\pm |c_jc_k|$ for all $j<k$. Hence, its negative spectrum contributes $\sum_{j<k}|c_jc_k|$ to the negativity. The $l_1$-coherence of the input is $2\sum_{j<k}|c_jc_k|$. Combining the two expressions gives the result.
\end{proof}

\section{Noise Channels}\label{sec:channels}

We now introduce the three noise models used throughout the paper. In all cases, the noise acts after the controlled-shift operation.

\subsection{Independent phase damping}

The phase-damping channel is defined phenomenologically by its action on the operator basis:
\begin{equation}
\ket{j}\bra{k}\longmapsto
\begin{cases}
\ket{j}\bra{j}, & j=k,\\
\sqrt{1-p}\ket{j}\bra{k}, & j\ne k.
\end{cases}
\label{eq:phase_single_action}
\end{equation}
The two-qunit channel is $\Ecal_{\rm ph}^{(2)}=\Ecal_{\rm ph}\otimes\Ecal_{\rm ph}$. A two-qunit coherence $\ket{jj}\bra{kk}$ with $j\ne k$ differs in both subsystems and is therefore multiplied by $\sqrt{1-p}\sqrt{1-p}=1-p$. The parameter $p\in[0,1]$ quantifies the strength of the phase damping:
$p=0$ leaves the system unchanged, while $p=1$ erases all off‑diagonal
coherences, reducing any state to a fully incoherent mixture in the
computational basis.

By applying Eq.~\eqref{eq:phase_single_action} to Eq.~\eqref{eq:maxcorr_density}, we obtain
\begin{equation}
\rho_{\rm ph}=\sum_{j=0}^{n-1}|c_j|^2\ket{jj}\bra{jj}
+(1-p)\sum_{j\ne k}c_jc_k^*\ket{jj}\bra{kk}.
\label{eq:rho_phase}
\end{equation}
The partial transpose is
\begin{equation}
\rho_{\rm ph}^{T_B}=\sum_{j=0}^{n-1}|c_j|^2\ket{jj}\bra{jj}
+(1-p)\sum_{j\ne k}c_jc_k^*\ket{jk}\bra{kj}.
\end{equation}
So, the diagonal eigenvalues are $|c_j|^2$. For each $j<k$, the pair block is
\begin{equation}
B_{jk}^{\rm ph}=\begin{pmatrix}
0 & (1-p)c_jc_k^*\\
(1-p)c_j^*c_k & 0
\end{pmatrix},
\end{equation}
with eigenvalues
\begin{equation}
\lambda_{jk,{\rm ph}}^{\pm}=\pm(1-p)|c_jc_k|.
\end{equation}
Therefore, we arrive at the following expression for negativity
\begin{equation}
N_{\rm ph}=(1-p)\sum_{j<k}|c_jc_k|=(1-p)N_0.
\label{eq:phase_negativity}
\end{equation}
Using Eq.~\eqref{eq:ideal_identity}, $N_{\rm ph}=\frac{1}{2}(1-p)C_{l_1}(\rho_A)$. The conversion efficiency relative to the ideal protocol is $\eta_{\rm ph}=N_{\rm ph}/N_0=1-p$.

\begin{proposition}[Universal phase-damping decay]
Independent phase damping suppresses the output negativity by a universal factor
$N_{\rm ph}=(1-p)N_0$.
This factor is independent of $n$ and independent of the probability distribution $\{|c_j|^2\}$.
\end{proposition}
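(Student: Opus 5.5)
The plan is to compute the full spectrum of $\rho_{\rm ph}^{T_B}$ using the same block decomposition as in the ideal case (the Theorem), and then read off the negative part.

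First, I will apply the single-qunit action Eq.~\eqref{eq:phase_single_action} factorwise to each operator $\ket{jj}\bra{kk}=\ket{j}\bra{k}_A\otimes\ket{j}\bra{k}_B$ in Eq.~\eqref{eq:maxcorr_density}. For $j=k$ both factors are diagonal and are left invariant. For $j\neq k$ both factors are off-diagonal, so the term picks up $(\sqrt{1-p})^2=1-p$. This gives Eq.~\eqref{eq:rho_phase}. The key structural point is that the channel rescales each basis operator and never mixes it with others, so the noisy state stays inside the span of $\{\ket{jj}\bra{kk}\}$.

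Second, I will take the partial transpose term by term, using $(\ket{jj}\bra{kk})^{T_B}=\ket{jk}\bra{kj}$. The resulting operator splits in the same way as $\rho_{AB}^{T_B}$:
\begin{itemize}
\item on the one-dimensional spaces spanned by $\ket{jj}$ it has eigenvalue $|c_j|^2\ge 0$;
\item on each pair space $\mathcal{H}_{jk}$ with $j<k$ it acts as $B_{jk}^{\rm ph}=(1-p)B_{jk}^{(0)}$;
\item it vanishes on the orthogonal complement of these spaces, which is automatic here since $\{\ket{jj}\}\cup\{\ket{jk},\ket{kj}\}_{j<k}$ already spans $\mathbb{C}^{n^2}$.
\end{itemize}
Since $1-p\ge 0$, the eigenvalues of each pair block are $(1-p)$ times those in Eq.~\eqref{eq:ideal_pair_eigs}, namely $\pm(1-p)|c_jc_k|$. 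The signs are preserved, so the negative eigenvalues are exactly $-(1-p)|c_jc_k|$ for $j<k$, together with zeros when $p=1$ or when some $c_j=0$.

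Third, I will sum these negative eigenvalues using Eq.~\eqref{eq:negativity_negative_eigs}. This gives $N_{\rm ph}=(1-p)\sum_{j<k}|c_jc_k|$, which equals $(1-p)N_0$ by Eq.~\eqref{eq:ideal_negativity}. The multiplying factor $1-p$ involves neither $n$ nor the weights $q_j$, which proves both universality claims. Combining with Eq.~\eqref{eq:ideal_identity} also yields $N_{\rm ph}=\tfrac12(1-p)C_{l_1}(\rho_A)$.

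The only step that needs care is confirming that the diagonal entries $|c_j|^2$ are not damped and do not enter the pair blocks. Because the pair blocks have zero diagonal, their spectrum is a pure rescaling of the ideal one. I would check this directly from the basis-operator action rather than relying on an unspecified Kraus representation. Everything else reduces to the ideal computation already carried out in the paper.
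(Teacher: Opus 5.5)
Your proposal is correct and follows the same route as the paper. You apply the phase-damping action to obtain $\rho_{\rm ph}$, take the partial transpose term by term, and note that each pair block is $B_{jk}^{\rm ph}=(1-p)B_{jk}^{(0)}$ with eigenvalues $\pm(1-p)|c_jc_k|$, while the diagonal eigenvalues $|c_j|^2$ are unchanged; summing the negative eigenvalues then gives $N_{\rm ph}=(1-p)N_0$, with a factor that depends on neither $n$ nor $\{|c_j|^2\}$.
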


\subsection{Global depolarizing channel}

The global two-qunit depolarizing channel is
\begin{equation}
\Ecal_{\rm dep}(\rho)=(1-p)\rho+\frac{p}{n^2}I_{n^2},
\label{eq:dep_channel}
\end{equation}
where $0\le p\le1$. This is a global channel on the full $n^2$-dimensional Hilbert space, not a product of local single-qunit depolarizing channels.
Applying Eq.~\eqref{eq:dep_channel} to the ideal state gives
\begin{equation}
\rho_{\rm dep}=(1-p)\sum_{j,k=0}^{n-1}c_jc_k^*\ket{jj}\bra{kk}+\frac{p}{n^2}I_{n^2}.
\label{eq:rho_dep}
\end{equation}
The partial transpose for this state can be written as
\begin{equation}
\rho_{\rm dep}^{T_B}=(1-p)\sum_{j,k=0}^{n-1}c_jc_k^*\ket{jk}\bra{kj}+\frac{p}{n^2}I_{n^2}.
\label{eq:rho_dep_pt}
\end{equation}
For $j=k$, the diagonal eigenvalues are
\begin{equation}
\lambda_j^{\rm dep}=(1-p)|c_j|^2+\frac{p}{n^2}\ge0,
\end{equation}
and for each $j<k$, the block in $\{\ket{jk},\ket{kj}\}$ is
\begin{equation}
B_{jk}^{\rm dep}=\begin{pmatrix}
\frac{p}{n^2} & (1-p)c_jc_k^*\\
(1-p)c_j^*c_k & \frac{p}{n^2}
\end{pmatrix}.
\end{equation}
Thus, the eigenvalues are
\begin{equation}
\lambda_{jk,{\rm dep}}^{\pm}=\frac{p}{n^2}\pm(1-p)|c_jc_k|.
\label{eq:dep_pair_eigs}
\end{equation}
Here, only $\lambda_{jk,{\rm dep}}^{-}$ can become negative. Therefore,
\begin{equation}
N_{\rm dep}=\sum_{j<k}\max\left\{0,(1-p)|c_jc_k|-\frac{p}{n^2}\right\}.
\label{eq:dep_negativity}
\end{equation}

\begin{proposition}[Depolarizing negativity]
For global two-qunit depolarizing noise, we have obtained an analytical expression for negativity as given by Eq.~\eqref{eq:dep_negativity}.
\end{proposition}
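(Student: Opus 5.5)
The plan is to compute the full spectrum of $\rho_{\rm dep}^{T_B}$ explicitly and then apply Eq.~\eqref{eq:negativity_negative_eigs}.

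The first step is to note that partial transposition is linear and fixes the identity, $(I_{n^2})^{T_B}=I_{n^2}$. Hence
\[
\rho_{\rm dep}^{T_B}=(1-p)\rho_{AB}^{T_B}+\frac{p}{n^2}I_{n^2},
\]
which is Eq.~\eqref{eq:rho_dep_pt}. Here $\rho_{AB}^{T_B}$ is the ideal partial transpose of Eq.~\eqref{eq:ideal_pt}.

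Adding a multiple of the identity does not change eigenvectors. So the invariant decomposition found in Sec.~\ref{sec:idealconversion} carries over unchanged:
\[
\mathcal{H}_{AB}=\bigoplus_j {\rm span}\{\ket{jj}\}\oplus\bigoplus_{j<k}\mathcal{H}_{jk}.
\]
This is a complete orthogonal decomposition. The $n$ diagonal lines and the $n(n-1)/2$ pair planes give $n+n(n-1)=n^2$ dimensions, so the list of eigenvalues below is the entire spectrum.

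On each line, the ideal eigenvalue $|c_j|^2$ is mapped to $(1-p)|c_j|^2+p/n^2$. This is non-negative for $p\in[0,1]$, so the diagonal blocks contribute nothing to the negativity.

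On each plane $\mathcal{H}_{jk}$, the ideal eigenvalues $\pm|c_jc_k|$ from Eq.~\eqref{eq:ideal_pair_eigs} become
\[
\frac{p}{n^2}\pm(1-p)|c_jc_k|,
\]
which is Eq.~\eqref{eq:dep_pair_eigs}. One can check this directly from $B_{jk}^{\rm dep}$: its trace is $2p/n^2$ and its determinant is $p^2/n^4-(1-p)^2|c_jc_k|^2$.

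Next, the sign analysis. The eigenvalue $\lambda^+_{jk,{\rm dep}}$ is non-negative. The eigenvalue $\lambda^-_{jk,{\rm dep}}$ is negative exactly when $(1-p)|c_jc_k|>p/n^2$, and in that case its modulus is $(1-p)|c_jc_k|-p/n^2$. Summing $|\lambda|$ over the negative eigenvalues, as in Eq.~\eqref{eq:negativity_negative_eigs}, gives
\[
\sum_{j<k}\max\{0,(1-p)|c_jc_k|-p/n^2\},
\]
which is Eq.~\eqref{eq:dep_negativity}.

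There is no genuine obstacle. The only point that needs care is completeness: one must confirm that the block decomposition accounts for all $n^2$ eigenvalues, so that no negative eigenvalue is missed. The dimension count above settles this. A minor edge case is a zero eigenvalue at the threshold $(1-p)|c_jc_k|=p/n^2$; it contributes zero either way, and the $\max$ handles it.
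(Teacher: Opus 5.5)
Your proposal is correct and follows essentially the same route as the paper. Like the paper, you write $\rho_{\rm dep}^{T_B}=(1-p)\rho_{AB}^{T_B}+\frac{p}{n^2}I_{n^2}$, reuse the diagonal/pair-block decomposition, read off the eigenvalues $\frac{p}{n^2}\pm(1-p)|c_jc_k|$, and sum the negative parts; your dimension count $n+n(n-1)=n^2$, confirming that these blocks exhaust the spectrum, is a worthwhile addition that the paper leaves implicit.
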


\subsection{Independent amplitude damping}
We use a zero-temperature amplitude-damping channel in which every excited state $\ket{a}$, $a\ge1$, relaxes to the ground state $\ket{0}$ with probability $p$. The single-qunit Kraus operators are
\begin{equation}
K_0=\ket{0}\bra{0}+\sqrt{1-p}\sum_{a=1}^{n-1}\ket{a}\bra{a},
\label{eq:ad_K0}
\end{equation}
\begin{equation}
K_a=\sqrt{p}\ket{0}\bra{a},
\qquad a=1,2,\ldots,n-1.
\label{eq:ad_Ka}
\end{equation}
They satisfy $K_0^\dagger K_0+\sum_{a=1}^{n-1}K_a^\dagger K_a=I_n$ (Appendix~\ref{app:C}). The two-qunit channel is $\Ecal_{\rm AD}^{(2)}=\Ecal_{\rm AD}\otimes\Ecal_{\rm AD}$.
Note that the amplitude damping channel requires a more detailed derivation because it distinguishes the ground state from the excited levels.


From the Kraus operators in Eqs.~\eqref{eq:ad_K0} and \eqref{eq:ad_Ka}, the relevant single-qunit transformations are
\begin{align}
\ket{0}\bra{0}&\longmapsto\ket{0}\bra{0},\nonumber\\
\ket{0}\bra{a}&\longmapsto\sqrt{1-p}\ket{0}\bra{a},\nonumber\\
\ket{a}\bra{0}&\longmapsto\sqrt{1-p}\ket{a}\bra{0},\\
\ket{a}\bra{a}&\longmapsto(1-p)\ket{a}\bra{a}+p\ket{0}\bra{0},\nonumber\\
\ket{a}\bra{b}&\longmapsto(1-p)\ket{a}\bra{b},\qquad a\ne b,
\quad a,b\ge1.\nonumber
\end{align}
These identities follow by direct substitution into $\Ecal_{\rm AD}(X)=\sum_{r=0}^{n-1}K_rXK_r^\dagger$, where $K_r$ denotes $K_0$ for $r=0$ and $K_a$ for $r=a\ge1$.


For amplitude-damped two-qunit state, we start from $\rho_{AB}=\sum_{j,k=0}^{n-1}c_jc_k^*\ket{jj}\bra{kk}$. Let's separate four types of terms. First, $\ket{00}\bra{00}\mapsto\ket{00}\bra{00}$. Second, for $a\ge1$,
\begin{align}
\ket{aa}\bra{aa}\longmapsto&(1-p)^2\ket{aa}\bra{aa}+p(1-p)\ket{a0}\bra{a0}
\nonumber\\
&+p(1-p)\ket{0a}\bra{0a}+p^2\ket{00}\bra{00}.
\label{eq:ad_diag_aa}
\end{align}
Third,
\begin{align}
&\ket{00}\bra{aa}\longmapsto(1-p)\ket{00}\bra{aa},
\nonumber\\
&\ket{aa}\bra{00}\longmapsto(1-p)\ket{aa}\bra{00}.
\label{eq:ad_00aa}
\end{align}
Fourth, for $a\ne b$ with $a,b\ge1$,
\begin{equation}
\ket{aa}\bra{bb}\longmapsto(1-p)^2\ket{aa}\bra{bb}.
\label{eq:ad_aabb}
\end{equation}
Combining Eqs.~\eqref{eq:ad_diag_aa}--\eqref{eq:ad_aabb}, the amplitude-damped density operator is given by
\begin{align}
\rho_{\rm AD}=&\left[|c_0|^2+p^2\sum_{a=1}^{n-1}|c_a|^2\right]\ket{00}\bra{00}
\nonumber\\
&+p(1-p)\sum_{a=1}^{n-1}|c_a|^2\left(\ket{a0}\bra{a0}+\ket{0a}\bra{0a}\right)
\nonumber\\
&+(1-p)^2\sum_{a=1}^{n-1}|c_a|^2\ket{aa}\bra{aa}
\nonumber\\
&+(1-p)\sum_{a=1}^{n-1}\left(c_0c_a^*\ket{00}\bra{aa}+c_ac_0^*\ket{aa}\bra{00}\right)
\nonumber\\
&+(1-p)^2\sum_{\substack{a,b=1\\a\ne b}}^{n-1}c_ac_b^*\ket{aa}\bra{bb}.
\label{eq:rho_ad}
\end{align}
The diagonal terms in Eq.~\eqref{eq:rho_ad} remain unchanged under partial transpose. The off-diagonal terms transform as
\begin{equation}
(\ket{00}\bra{aa})^{T_B}=\ket{0a}\bra{a0},
\qquad
(\ket{aa}\bra{00})^{T_B}=\ket{a0}\bra{0a},
\end{equation}
and $(\ket{aa}\bra{bb})^{T_B}=\ket{ab}\bra{ba}$. Therefore, $\rho_{\rm AD}^{T_B}$ decomposes into independent blocks of two types.

For each $a=1,\ldots,n-1$, the subspace $\mathcal{H}_{0a}^{\rm AD}={\rm span}\{\ket{0a},\ket{a0}\}$ contains the block
\begin{equation}
B_{0a}^{\rm AD}=\begin{pmatrix}
p(1-p)|c_a|^2 & (1-p)c_0c_a^*\\
(1-p)c_0^*c_a & p(1-p)|c_a|^2
\end{pmatrix},
\end{equation}
and its eigenvalues are
\begin{equation}
\lambda_{0a,{\rm AD}}^{\pm}=p(1-p)|c_a|^2\pm(1-p)|c_0c_a|.
\label{eq:ad_0a_eigs}
\end{equation}
For each pair $1\le a<b\le n-1$, the subspace $\mathcal{H}_{ab}^{\rm AD}={\rm span}\{\ket{ab},\ket{ba}\}$ contains
\begin{equation}
B_{ab}^{\rm AD}=\begin{pmatrix}
0 & (1-p)^2c_ac_b^*\\
(1-p)^2c_a^*c_b & 0
\end{pmatrix},
\end{equation}
with eigenvalues
\begin{equation}
\lambda_{ab,{\rm AD}}^{\pm}=\pm(1-p)^2|c_ac_b|.
\label{eq:ad_ab_eigs}
\end{equation}
All remaining eigenvalues are diagonal and non-negative.

From Eq.~\eqref{eq:ad_0a_eigs}, the negative part of the ground-excited block is $\max\{0,(1-p)(|c_0c_a|-p|c_a|^2)\}$. From Eq.~\eqref{eq:ad_ab_eigs}, each excited-excited block contributes $(1-p)^2|c_ac_b|$. Thus, we obtain the following expression for negativity as
\begin{align}
N_{\rm AD}=&\sum_{a=1}^{n-1}\max\left\{0,(1-p)\left(|c_0c_a|-p|c_a|^2\right)\right\}\nonumber\\
&+(1-p)^2\sum_{1\le a<b\le n-1}|c_ac_b|.
\label{eq:ad_negativity}
\end{align}

\begin{proposition}[Amplitude-damping negativity]
For independent zero-temperature amplitude damping on both qunits, we have obtained an analytical formula for negativity~\eqref{eq:ad_negativity}.
\end{proposition}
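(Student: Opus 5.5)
The plan is to compute $\rho_{\rm AD}=(\Ecal_{\rm AD}\otimes\Ecal_{\rm AD})(\rho_{AB})$ explicitly, show that its partial transpose is block diagonal with blocks of size one and two, and read off the negative eigenvalues. Because the channel is a product, I first apply the single-qunit rules listed above to each factor of $\ket{jj}\bra{kk}=\ket{j}\bra{k}\otimes\ket{j}\bra{k}$. This splits into four cases. For $j=k=0$ the term is unchanged. For $j=k=a\ge1$ one gets Eq.~\eqref{eq:ad_diag_aa}. For $\{j,k\}=\{0,a\}$ the factor is $\sqrt{1-p}\cdot\sqrt{1-p}=1-p$. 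For distinct $a,b\ge1$ the factor is $(1-p)^2$.

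The point to check carefully is that off-diagonal operators $\ket{j}\bra{k}$ with $j\neq k$ are never mapped onto diagonal or other off-diagonal operators; they are only rescaled. This holds because $K_a\ket{j}\bra{k}K_a^\dagger=p\,\delta_{aj}\delta_{ak}\ket{0}\bra{0}$. Summing the four cases gives Eq.~\eqref{eq:rho_ad}.

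Next, I take the partial transpose using $(\ket{jj}\bra{kk})^{T_B}=\ket{jk}\bra{kj}$. Diagonal product terms are unchanged. I then group the basis vectors:
\begin{itemize}
\item $\ket{00}$ and each $\ket{aa}$ are one-dimensional invariant subspaces, with nonnegative diagonal weights.
\item For each $a\ge1$, the span of $\{\ket{0a},\ket{a0}\}$ carries the diagonal weight $p(1-p)|c_a|^2$ from the decay terms, together with the coherence $(1-p)c_0c_a^*$.
\item For each pair $1\le a<b$, the span of $\{\ket{ab},\ket{ba}\}$ has zero diagonal, since no populations land on $\ket{ab}$ with $a\ne b\ge1$, and off-diagonal entry $(1-p)^2c_ac_b^*$.
\end{itemize}
These subspaces are mutually orthogonal and together span $\mathbb{C}^{n^2}$, so the partial transpose is a direct sum of the blocks $B^{\rm AD}_{0a}$ and $B^{\rm AD}_{ab}$ plus the diagonal entries.

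Each block is a $2\times2$ Hermitian matrix of the form $\begin{pmatrix}\alpha&\beta\\ \beta^*&\alpha\end{pmatrix}$, with eigenvalues $\alpha\pm|\beta|$. This yields Eqs.~\eqref{eq:ad_0a_eigs} and \eqref{eq:ad_ab_eigs}. By Eq.~\eqref{eq:negativity_negative_eigs}, the negativity is the sum of the moduli of the negative eigenvalues:
\begin{itemize}
\item In a ground-excited block, $\lambda^+\ge0$, and $\lambda^-=(1-p)(p|c_a|^2-|c_0c_a|)$ contributes $\max\{0,(1-p)(|c_0c_a|-p|c_a|^2)\}$.
\item In an excited-excited block, the contribution is $(1-p)^2|c_ac_b|$.
\end{itemize}
Summing these contributions gives Eq.~\eqref{eq:ad_negativity}.

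The main obstacle is bookkeeping for the populations: confirming that the decayed populations $p(1-p)|c_a|^2$ sit exactly on $\ket{a0}$ and $\ket{0a}$, the same vectors coupled by the $0$–$a$ coherence, and that nothing extra lands on $\ket{ab}$ with $a\ne b\ge1$. I would verify this directly from the product of single-qunit maps before diagonalizing.
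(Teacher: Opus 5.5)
Your proposal is correct and follows the paper's derivation essentially step by step. Like the paper, you apply the single-qunit Kraus rules factorwise to the four types of terms $\ket{jj}\bra{kk}$ to get Eq.~\eqref{eq:rho_ad}, take the partial transpose, split it into the $\{\ket{0a},\ket{a0}\}$ and $\{\ket{ab},\ket{ba}\}$ blocks plus nonnegative diagonal entries, and sum the negative eigenvalues. Your explicit check that $K_a\ket{j}\bra{k}K_a^\dagger=p\,\delta_{aj}\delta_{ak}\ket{0}\bra{0}$ vanishes for $j\neq k$ usefully justifies the block-diagonality, which the paper only asserts.
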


\section{Sudden-Death Thresholds and Maximally Coherent Inputs}\label{sec:suddendeath}


For depolarizing thresholds with the pair $(j,k)$, Eq.~\eqref{eq:dep_negativity} contributes positively if $(1-p)|c_jc_k|>p/n^2$. The threshold is obtained by equality:
\begin{equation}
(1-p_c^{(jk)})|c_jc_k|=\frac{p_c^{(jk)}}{n^2}.
\end{equation}
Solving, we arrive at
\begin{equation}
p_c^{(jk)}=\frac{n^2|c_jc_k|}{1+n^2|c_jc_k|}.
\label{eq:pc_dep_pair}
\end{equation}
The last surviving pair is the pair with maximal product $M=\max_{j<k}|c_jc_k|$. The global depolarizing threshold is therefore
\begin{equation}
p_c^{\rm dep}=\frac{n^2M}{1+n^2M}.
\label{eq:pc_dep_global}
\end{equation}


For amplitude-damping thresholds with a ground-excited pair $(0,a)$, the contribution is positive if $|c_0c_a|>p|c_a|^2$. If $|c_a|\ne0$, this becomes $p<|c_0|/|c_a|$. Thus a finite threshold in $[0,1)$ exists only when $|c_0|<|c_a|$, with
\begin{equation}
p_c^{(0a),\rm AD}=\frac{|c_0|}{|c_a|}.
\label{eq:pc_ad_0a}
\end{equation}
If $|c_0|\ge |c_a|$, the contribution vanishes only at $p=1$. For excited-excited pairs $(a,b)$ with $a,b\ge1$, the contribution is $(1-p)^2|c_ac_b|$, so it vanishes only at $p=1$ whenever $c_ac_b\ne0$.

\begin{remark}
The amplitude-damping channel distinguishes the computational basis state $\ket{0}$ as a physical ground level. Consequently, its sudden-death structure is not invariant under arbitrary relabeling of basis states. This is in contrast with the global depolarizing channel, which is isotropic on the full two-qunit Hilbert space.
\end{remark}


A maximally coherent input state has the form
\begin{equation}
\ket{\psi_A^{\rm mc}}=\frac{1}{\sqrt n}\sum_{j=0}^{n-1}e^{i\phi_j}\ket{j}_A.
\end{equation}
Then $|c_j|=1/\sqrt n$ for all $j$. There are $n(n-1)/2$ unordered pairs, each with $|c_jc_k|=1/n$. Therefore,
\begin{equation}
N_0^{\max}=\frac{n(n-1)}{2}\frac{1}{n}=\frac{n-1}{2}.
\label{eq:max_ideal}
\end{equation}
The input coherence is $C_{l_1}^{\max}=n-1$, consistent with $N_0^{\max}=C_{l_1}^{\max}/2$.

Hence, we obtain the following expression for phase damping
\begin{equation}
N_{\rm ph}^{\max}=\frac{n-1}{2}(1-p).
\label{eq:max_phase}
\end{equation}
Also, for global depolarization we have
\begin{align}
N_{\rm dep}^{\max}
&=\frac{n(n-1)}{2}\max\left\{0,\frac{1-p}{n}-\frac{p}{n^2}\right\}\nonumber\\
&=\frac{n-1}{2}\max\left\{0,1-p-\frac{p}{n}\right\},
\label{eq:max_dep}
\end{align}
which the depolarizing threshold is
\begin{equation}
p_c^{\max}=\frac{n}{n+1}.
\label{eq:max_pc_dep}
\end{equation}
For amplitude damping, the ground-excited contribution is $(n-1)(1-p)(1/n-p/n)=(n-1)(1-p)^2/n$, and the excited-excited contribution is $(1-p)^2(n-1)(n-2)/(2n)$. Adding both gives
\begin{equation}
N_{\rm AD}^{\max}=\frac{n-1}{2}(1-p)^2.
\label{eq:max_ad}
\end{equation}

\begin{corollary}[Maximally coherent input]
For $|c_j|=1/\sqrt n$, we get
\begin{align}
N_0^{\max}&=\frac{n-1}{2},\nonumber\\
N_{\rm ph}^{\max}&=\frac{n-1}{2}(1-p),\nonumber\\
N_{\rm dep}^{\max}&=\frac{n-1}{2}\max\left\{0,1-p-\frac{p}{n}\right\},\nonumber\\
N_{\rm AD}^{\max}&=\frac{n-1}{2}(1-p)^2,\nonumber
\end{align}
with depolarizing threshold as $p_c^{\max}=n/(n+1)$.
\end{corollary}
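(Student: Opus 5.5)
The plan is to specialize each of the general formulas derived earlier to $|c_j|=1/\sqrt n$ for all $j$. In this case every pair product equals $|c_jc_k|=1/n$, and there are $n(n-1)/2$ unordered pairs.

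\textbf{Ideal and phase-damped cases.} Substituting into Eq.~\eqref{eq:ideal_negativity} gives $N_0^{\max}=\frac{n(n-1)}{2}\cdot\frac1n=\frac{n-1}{2}$. As a consistency check against Theorem~1, Eq.~\eqref{eq:l1_compact} gives $C_{l_1}=(n\cdot n^{-1/2})^2-1=n-1$, so $N_0^{\max}=C_{l_1}/2$ as expected. The phase-damping line then follows immediately from the proposition on universal phase-damping decay, $N_{\rm ph}=(1-p)N_0$.

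\textbf{Depolarizing case.} In Eq.~\eqref{eq:dep_negativity} every summand is identical and equals $\max\{0,(1-p)/n-p/n^2\}$. Factoring out $1/n$ from inside the maximum and multiplying by the number of pairs gives $\frac{n-1}{2}\max\{0,1-p-p/n\}$. For the threshold, set $1-p-p/n=0$, which gives $p_c^{\max}=n/(n+1)$. This agrees with Eq.~\eqref{eq:pc_dep_global} with $M=1/n$, since $n^2M=n$.

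\textbf{Amplitude-damping case.} This is the only step with any subtlety, because the maximum in the ground-excited terms of Eq.~\eqref{eq:ad_negativity} must be resolved.
\begin{itemize}
\item Each ground-excited term is $(1-p)(|c_0c_a|-p|c_a|^2)=(1-p)(1/n-p/n)=(1-p)^2/n$. This is non-negative for all $p\in[0,1]$, so the maximum is inactive. This is consistent with Eq.~\eqref{eq:pc_ad_0a}, since $|c_0|=|c_a|$ means no finite threshold occurs.
\item There are $n-1$ such terms, contributing $(n-1)(1-p)^2/n$ in total.
\item The excited-excited sum runs over $\binom{n-1}{2}$ pairs, each equal to $(1-p)^2/n$, contributing $(1-p)^2(n-1)(n-2)/(2n)$.
\end{itemize}
Adding the two contributions gives $(1-p)^2\frac{n-1}{n}\left(1+\frac{n-2}{2}\right)=\frac{n-1}{2}(1-p)^2$, which completes the corollary.
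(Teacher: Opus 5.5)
Your proposal is correct and takes the same route as the paper. You substitute $|c_jc_k|=1/n$ into the general negativity formulas and count $n(n-1)/2$ pairs, and you split the amplitude-damping case into $n-1$ ground-excited and $\binom{n-1}{2}$ excited-excited contributions. Your remark that each ground-excited term $(1-p)^2/n$ is non-negative, so the maximum is inactive, makes explicit a step the paper leaves implicit.
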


\begin{table*}[t]
\caption{
Summary of the principal analytical results for coherence-to-entanglement conversion in the proposed protocol.
}
\label{tab:mainresults}

\begin{tabular*}{\textwidth}{@{\extracolsep{\fill}}lll}
\hline\hline

Result &
General expression &
Maximally coherent state  \\
\hline

Input coherence &
$C_{l_1}=2\sum_{j<k}|c_j c_k|$ &
$C_{l_1}^{\mathrm{max}}=n-1$  \\

Ideal negativity &
$N_0=\sum_{j<k}|c_j c_k|
=\tfrac{1}{2}C_{l_1}$ &
$N_0^{\mathrm{max}}=\tfrac{n-1}{2}$  \\

Phase damping &
$N_{\mathrm{ph}}=(1-p)N_0$ &
$N_{\mathrm{ph}}^{\mathrm{max}}=\tfrac{n-1}{2}(1-p)$  \\

Global depolarization &
$N_{\mathrm{dep}}=\sum_{j<k}
\max\left[0,(1-p)|c_j c_k|-\tfrac{p}{n^2}\right]$ &
$N_{\mathrm{dep}}^{\mathrm{max}}=\tfrac{n-1}{2}
\max\left[0,1-p-\tfrac{p}{n}\right]$  \\

Critical dep. strength &
$p_c^{(jk)}=\frac{n^2|c_j c_k|}{1+n^2|c_j c_k|}$ &
$p_c^{\mathrm{max}}=\frac{n}{n+1}$  \\

Amplitude damping &
$N_{\mathrm{AD}}=
\sum_{a=1}^{n-1}
\max\left[0,(1-p)(|c_0 c_a|-p|c_a|^2)\right]+(1-p)^2\sum_{a<b}|c_a c_b|$ &
$N_{\mathrm{AD}}^{\mathrm{max}}=\tfrac{n-1}{2}(1-p)^2$  \\

Large-$n$ scaling &
$N_0^{\mathrm{max}}\sim n/2$ &
$p_c^{\mathrm{max}}\to 1$ \\

\hline\hline
\end{tabular*}
\end{table*}


For maximally coherent inputs, $N_0^{\max}=(n-1)/2$ grows linearly with $n$. The depolarizing threshold $p_c^{\max}=n/(n+1)$ approaches one as $n\to\infty$. The reason is that the depolarizing spectral shift scales as $p/n^2$, whereas each pairwise maximally coherent amplitude scales as $1/n$. Thus, the pairwise signal-to-noise ratio improves linearly with dimension under global depolarization.

Table~\ref{tab:mainresults} summarizes the central analytical results obtained throughout this work. The table highlights the exact proportionality between the generated entanglement and the initial quantum coherence in the noiseless protocol, together with the distinct degradation mechanisms induced by phase damping, depolarizing noise, and amplitude damping. While phase damping leads to a simple linear reduction of negativity, depolarizing noise introduces finite critical thresholds associated with entanglement sudden death. In contrast, amplitude damping produces an asymmetric decay arising from population relaxation toward the ground state. The table further reveals the favorable scaling of both negativity and noise robustness with increasing Hilbert-space dimension, demonstrating the advantages of high-dimensional coherent resources.



As an example, let us consider a qutrit case, $n=3$, as follows
\begin{equation}
\ket{\psi_A}=c_0\ket{0}+c_1\ket{1}+c_2\ket{2}.
\end{equation}
The controlled-shift output is
\begin{equation}
\ket{\Psi}=c_0\ket{00}+c_1\ket{11}+c_2\ket{22}.
\end{equation}
So, the ideal negativity is
\begin{equation}
N_0=|c_0c_1|+|c_0c_2|+|c_1c_2|.
\end{equation}
For a phase damping channel, the negativity takes the form
\begin{equation}
N_{\rm ph}=(1-p)(|c_0c_1|+|c_0c_2|+|c_1c_2|),
\end{equation}
and for global depolarization, we get
\begin{align}
N_{\rm dep}=&\max\left\{0,(1-p)|c_0c_1|-\frac{p}{9}\right\}\nonumber\\
&+\max\left\{0,(1-p)|c_0c_2|-\frac{p}{9}\right\}
\nonumber\\
&+\max\left\{0,(1-p)|c_1c_2|-\frac{p}{9}\right\}.
\end{align}
Besides, we obtain the following form for amplitude damping
\begin{align}
N_{\rm AD}=&\max\left\{0,(1-p)(|c_0c_1|-p|c_1|^2)\right\}
\nonumber\\
&+\max\left\{0,(1-p)(|c_0c_2|-p|c_2|^2)\right\}\nonumber\\
&+(1-p)^2|c_1c_2|.
\end{align}
Finally, for the maximally coherent qutrit input $|c_0|=|c_1|=|c_2|=1/\sqrt3$, one obtains
\begin{equation}
N_0^{\max}=1,
\qquad
N_{\rm ph}^{\max}=1-p,\nonumber
\end{equation}
\begin{equation}
N_{\rm dep}^{\max}=\max\left\{0,1-\frac{4p}{3}\right\},
\qquad
p_c^{\max}=\frac{3}{4},\nonumber
\end{equation}
\begin{equation}
N_{\rm AD}^{\max}=(1-p)^2.\nonumber
\end{equation}

\section{Discussion}\label{sec:discussion}

The exact results reveal a clear separation between three noise mechanisms. Phase damping preserves populations and uniformly suppresses all off-diagonal coherences. Since the entanglement generated by the controlled-shift operation is entirely carried by coherences of the form $\ket{jj}\bra{kk}$, the negativity simply inherits a universal factor $1-p$. No additional spectral offset appears in the partial transpose, and therefore no finite-$p$ sudden death occurs for any nonzero pairwise amplitude.

Global depolarization is qualitatively different. It suppresses the coherent part by $1-p$ and adds an isotropic identity contribution $pI_{n^2}/n^2$. In the partial transpose, the identity term uniformly lifts every pair block by $p/n^2$. A pair contribution survives only if the suppressed coherence amplitude exceeds this noise floor. This produces pair-resolved sudden-death thresholds governed by $|c_jc_k|$. The global entanglement threshold is controlled by the largest pairwise amplitude product, not by the total coherence alone.

Amplitude damping has a third structure. It is not isotropic and not merely dephasing; it selects $\ket{0}$ as a ground state. As a result, coherences involving $\ket{0}$ acquire diagonal populations in the partial-transpose blocks through decay of excited populations. These diagonal terms can eliminate the negative eigenvalue at finite $p$ when $|c_a|>|c_0|$. By contrast, coherences between two excited levels decay as $(1-p)^2$ and do not acquire the same diagonal lifting in the corresponding $\{\ket{ab},\ket{ba}\}$ block. Therefore, they vanish only at complete damping.

The maximally coherent case is especially transparent. In this case, all pairwise products are equal and the ideal negativity reaches $(n-1)/2$. The phase-damped negativity decays linearly in $1-p$, while the amplitude-damped negativity decays quadratically. The depolarizing threshold increases as $n/(n+1)$, showing a form of dimensional robustness against global white noise. This does not mean that arbitrary high-dimensional states are always more robust; rather, it means that the uniformly coherent state distributes pairwise coherence in a way that benefits from the $1/n^2$ scaling of the global depolarizing background.

These formulas may serve as analytic benchmarks for high-dimensional quantum-information tasks. Maximally correlated qunit/qudit states occur naturally in entanglement theory, quantum communication, dense coding, teleportation, and resource conversion \cite{Bennett1993,Bennett1992,Werner1989,Horodecki2009,Durt2010,Erhard2020}. Exact negativity expressions are valuable because high-dimensional entanglement measures are often difficult to compute. The present block-decomposition approach isolates the contribution of every pairwise coherence and makes the noise dependence transparent.

\section{Conclusion}\label{sec:conclusion}

We have presented a complete analytical theory of coherence-to-entanglement conversion in noisy two-qunit systems. A coherent input state $\ket{\psi_A}=\sum_jc_j\ket{j}_A$, together with an incoherent ancilla $\ket{0}_B$, is transformed by the generalized controlled-shift operation into the maximally correlated state $\sum_jc_j\ket{jj}_{AB}$. The partial transpose decomposes into independent two-dimensional pair blocks, yielding the exact ideal relation
$N_0=\frac{1}{2}C_{l_1}(\rho_A)$
for arbitrary dimension $n$.
We derived exact negativities under independent phase damping, global depolarizing noise, and independent zero-temperature amplitude damping. Phase damping gives a universal multiplicative law, $N_{\rm ph}=(1-p)N_0$. Global depolarization gives pairwise threshold contributions and a global sudden-death point determined by the largest product $|c_jc_k|$. Amplitude damping gives an asymmetric formula in which ground-excited and excited-excited coherence pairs have distinct decay laws.
For maximally coherent inputs, the noisy negativities reduce to simple closed forms, expressed in Corollary~5.
The depolarizing threshold $p_c^{\max}=n/(n+1)$ increases with dimension, while amplitude damping produces quadratic suppression rather than finite-$p$ sudden death in the maximally coherent case. These results provide closed-form benchmarks for high-dimensional resource conversion and clarify how different noise mechanisms limit the transformation of local coherence into bipartite entanglement.

\appendix

\section{Derivation of the $l_1$-norm of coherence identity}
\label{app:A}
For a pure state $\ket{\psi}=\sum_jc_j\ket{j}$,
\begin{equation}
\rho=\sum_{j,k}c_jc_k^*\ket{j}\bra{k}.
\end{equation}
Thus,
\begin{equation}
C_{l_1}=\sum_{j\ne k}|c_jc_k^*|=\sum_{j\ne k}|c_j||c_k|.
\end{equation}
The right-hand side counts each unordered pair twice:
\begin{equation}
\sum_{j\ne k}|c_j||c_k|=2\sum_{j<k}|c_j||c_k|.
\end{equation}
Since $|c_j||c_k|=|c_jc_k|$, Eq.~\eqref{eq:l1_pairwise} follows.

\section{Eigenvalues of two-dimensional Hermitian pair blocks}
\label{app:B}
The recurring block form is
\begin{equation}
B=\begin{pmatrix}
d & z\\
z^* & d
\end{pmatrix},
\end{equation}
where $d\in\mathbb{R}$ and $z\in\mathbb{C}$. The characteristic polynomial is
\begin{equation}
\det(B-\lambda I)=(d-\lambda)^2-|z|^2.
\end{equation}
Thus,
\begin{equation}
\lambda_{\pm}=d\pm |z|.
\end{equation}
For $d=0$, this gives $\pm|z|$. For depolarization, $d=p/n^2$. For the amplitude-damping ground-excited block, $d=p(1-p)|c_a|^2$ and $z=(1-p)c_0c_a^*$.

\section{Trace preservation of the amplitude-damping channel}
\label{app:C}
Using Eqs.~\eqref{eq:ad_K0} and \eqref{eq:ad_Ka}, we have
\begin{align}
&K_0^\dagger K_0=\ket{0}\bra{0}+(1-p)\sum_{a=1}^{n-1}\ket{a}\bra{a},\\
&\sum_{a=1}^{n-1}K_a^\dagger K_a=p\sum_{a=1}^{n-1}\ket{a}\bra{a}.
\end{align}
Adding gives
\begin{equation}
K_0^\dagger K_0+\sum_{a=1}^{n-1}K_a^\dagger K_a=\ket{0}\bra{0}+\sum_{a=1}^{n-1}\ket{a}\bra{a}=I_n.
\end{equation}
Therefore, the channel is trace preserving. Complete positivity follows from its Kraus representation.

\vspace{1cm}
\textbf{Author contributions:}
A.A. contributed to the conceptualization of the study, performed the calculations, interpreted the results, and prepared the original draft of the manuscript. A.M.R. and S.A. contributed to the conceptualization and investigation of the study and were involved in reviewing the manuscript. H.K. and M.T.R. contributed to the investigation and preparation of the original draft. S.H. contributed to the methodology and investigation and participated in reviewing and editing the manuscript.~All authors thoroughly checked the manuscript, discussed the results, and approved the final version of the manuscript.\\
\\
\textbf{Data~availability:}
All data that support the findings of this study are included within the article.\\
\\
\textbf{Competing~interests:}
The authors declare that they have no competing interests.\\
\\
\textbf{Research~funding:}
 This research did not receive funding.

\bibliography{Bibliography}

\end{document}